\documentclass[hidelinks]{ifacconf}
\usepackage{algorithm}
\usepackage{algpseudocode}

\usepackage{tabularx}
\usepackage{makecell} 
\usepackage{array} 
\usepackage{xcolor}
\usepackage{tabularx,makecell,array}
\usepackage{enumitem} 
\usepackage{amssymb}
\usepackage{amsfonts}
\usepackage{amsmath}
\usepackage{graphicx}     
\usepackage{natbib}        
\usepackage{doi}

\makeatletter
\let\bibliographystyle\@gobble
\AtBeginDocument{%
  \renewenvironment{thebibliography}[1]{%
    \par\addvspace{6pt}%
    \noindent{\bfseries \refname}\par\nobreak\addvspace{3pt}%
    \small
    \list{}{%
      \leftmargin=1em%
      \itemindent=-1em%
      \labelwidth=0pt%
      \labelsep=0pt%
      \itemsep=0pt%
      \parsep=0pt%
    }%
    \sloppy
    \clubpenalty4000
    \widowpenalty4000
    \sfcode`\.=\@m
    \let\NAT@bibitem@first@sw\@firstoftwo
  }{\endlist}%
}
\makeatother

\makeatletter
\def\doi#1{}
\makeatother


\usepackage{enumitem}
\usepackage{stfloats}   
\newcommand{\hop}{\mathsf{H}}
\DeclareMathOperator*{\argmin}{argmin}

\newtheorem{theorem}{Theorem} 
\newtheorem{remark}{Remark} 
\newtheorem{definition}{Definition} 

\newtheorem{corollary}{Corollary}
\newtheorem{example}{Example}

\newtheorem{problem}{Problem}
\newenvironment{proof}{\begin{pf}}{\qed\end{pf}}

\usepackage{xcolor}
\usepackage{tikz}
\definecolor{dblue}{rgb}{0 0 0.7}
\definecolor{red}{rgb}{1 0 0}
\definecolor{MatlabBlue}{rgb}{0     , 0.4470, 0.7410}
\definecolor{MatlabRed}{rgb}{0.6350 0.0780 0.1840}
\definecolor{MatlabOrange}{rgb}{0.8500, 0.3250, 0.0980}
\definecolor{MatlabYellow}{rgb}{0.9290, 0.6940, 0.1250}
\definecolor{MatlabPurple}{rgb}{0.4940, 0.1840, 0.5560}
\definecolor{MatlabGreen}{rgb}{0.4660, 0.6740, 0.1880}
\definecolor{MatlabBabyBlue}{rgb}{0.3010, 0.7450, 0.9330}
\definecolor{MatlabGray}{rgb}{0.5, 0.5, 0.5}
\definecolor{MatlabLightGray}{rgb}{0.75, 0.75, 0.75}
\definecolor{MatlabBlack}{rgb}{0, 0, 0}
\definecolor{MatlabLightGray4}{rgb}{0.875, 0.875, 0.875}
\definecolor{MatlabLightGray3}{rgb}{0.85, 0.85, 0.85}
\definecolor{MatlabLightGray2}{rgb}{0.775, 0.775, 0.775}
\definecolor{MatlabLightGray1}{rgb}{0.7, 0.7, 0.7}
\definecolor{MatlabGray20}{rgb}{0.2, 0.2, 0.2}
\definecolor{MatlabGray30}{rgb}{0.3, 0.3, 0.3}
\definecolor{MatlabGray40}{rgb}{0.4, 0.4, 0.4}
\definecolor{MatlabGray50}{rgb}{0.5, 0.5, 0.5}
\definecolor{MatlabGray60}{rgb}{0.6, 0.6, 0.6}
\definecolor{MatlabGray70}{rgb}{0.7, 0.7, 0.7}
\definecolor{MatlabGray80}{rgb}{0.8, 0.8, 0.8}
\definecolor{MatlabGray85}{rgb}{0.85, 0.85, 0.85}
\definecolor{MatlabGray90}{rgb}{0.9, 0.9, 0.9}
 
\newcommand{\tikzline}[1]{(\protect\tikz[baseline=-0.6ex,x=1pt,y=1pt, line width=0.4mm]{ \protect\draw[#1] [-] (0,0) -- (10,0);})}
\newcommand{\tikzdashedline}[1]{(\protect\tikz[baseline=-0.6ex,x=0.9pt,y=1pt]{ \protect\draw[#1,dashed] [-] (0,0) -- (10,0);})}

\newcommand{\tikzdot}[1]{%
(\protect\tikz[baseline=-0.6ex,x=1pt,y=1pt]{%
    \protect\draw[#1, fill=#1] (0,0) circle (2pt);})%
}

\usepackage{tikz} 

\makeatletter
\renewcommand\thesubsubsection{\arabic{section}.\arabic{subsection}.\arabic{subsubsection}}
\renewcommand\subsubsection{\@startsection{subsubsection}{3}{\z@}%
  {1.0ex \@plus .2ex}%
  {0.5ex}%
  {\normalfont\normalsize\bfseries}}
\makeatother

\makeatletter
\providecommand{\addperiod}[1]{#1}
\renewcommand{\addperiod}[1]{%
  \ifnum\pdfstrcmp{\expandafter\@gobble\string#1}{.}=0\relax
  \else.%
  \fi}
\makeatother

\usepackage{eso-pic}
\AddToShipoutPictureBG*{%
  \AtPageUpperLeft{%
    \setlength\unitlength{1in}%
    \hspace*{\dimexpr0.5\paperwidth\relax}%
    \makebox(0,-2.9)[c]{%
      \begin{tabular}{c}
        Maarten van der Hulst \emph{et al.}, \\
        \emph{Sparse add-on controller design: A Youla approach to system-level performance}, \\
        uploaded to arXiv \today \\
      \end{tabular}%
    }%
  }%
}

\begin{document}
\begin{frontmatter}

\title{Sparse add-on controller design: A Youla approach to system-level performance}

\thanks[footnoteinfo]{This project is funded by Holland High Tech | TKI HTSM via the PPP Innovation Scheme (PPP-I) for public-private partnerships.}

\author[First]{M. van der Hulst}
\author[Second]{N. Dirkx}
\author[First]{R. A. Gonz\'alez}
\author[First]{K. Tiels}
\author[Second]{J. van de Wijdeven}
\author[First,Third]{T. Oomen}

\address[First]{Dept. of Mechanical Engineering, Eindhoven University of Technology, The Netherlands}
\address[Second]{ASML, Veldhoven, The Netherlands}
\address[Third]{Delft Center for Systems and Control, Delft University of Technology, The Netherlands}

\begin{abstract}             
The performance of high-tech systems is often dictated by a few performance objectives shared among the many closed-loop controlled subsystems operating in the machine, such as synchronization, coordination, and alignment, which necessitates control methods that explicitly address them to achieve optimal performance. The aim of this paper is to introduce a framework that improves system performance through system-level controllers designed to be implemented as add-ons to the existing subsystem control structure. The developed method parametrizes all stabilizing system-level add-on controllers using the Youla framework, enabling a convex formulation of the sparse $\mathcal{H}_2$ synthesis problem. The result is a sparse add-on controller that achieves the optimal trade-off between combined performance and interconnection complexity, as demonstrated through numerical simulations.  
\end{abstract}

\begin{keyword}
Networked control systems, Decentralized control, $\mathcal H_2$ control, Sparse optimization, Linear systems, Motion control.
\end{keyword}

\end{frontmatter}

\section{Introduction}
High-tech systems consist of numerous closed-loop controlled subsystems that must operate cooperatively to achieve overall system-level performance. Examples include ground-based telescopes, where the primary mirror comprises hundreds of individual segments that must be precisely aligned to achieve a uniform optical surface, industrial printing systems that rely on coordinated motion of belt drives and print heads to ensure print quality, and wafer scanners that require multiple motion stages to move synchronously for accurate wafer exposures. For such systems, achieving optimal system-level performance requires control methods that explicitly address the combined objectives across subsystems. For industrial adoption, it is essential that these methods operate as add-on to the existing control structure and limit the number of interconnections to reduce communication links and implementation complexity, both of which become critical in large-scale systems.

The current industry standard for designing controllers for such systems is using decentralized control methods \citep{Skogestad2006MultivariableDesign, Oomen2018AdvancedSystems}, where subsystems are designed largely individually to meet local specifications that collectively aim to achieve overall system performance \citep{Heertjes2020ControlDevelopments}. While providing clear benefits  in terms of modularity and design simplicity, substantial design freedom remains unexploited in terms of inter-subsystem coupling, resulting in suboptimal system-level performance despite the locally optimal designs.

Early approaches aimed at improving system-level performance introduced coupling controllers as add-ons to the existing decentralized controllers \citep{Tay1997HighControl}. These include $\mathcal{H}_\infty$-based synthesis methods \citep{Looijen2018RobustOptimization}, iterative learning control techniques \citep{Barton2008AControl, Mishra2008IterativeStages}, and data-driven methods for online optimization \citep{Heertjes2013Self-tuningSystems}. A more general framework for bi-directional coupling control design based on Youla–Kučera parameterizations was introduced by \cite{Evers2019BeyondMotion}, enabling both $\mathcal{H}_2/\mathcal{H}_\infty$ synthesis and manual loop-shaping design. While these frameworks provide effective solutions, they are limited to two-subsystem cases, mainly addressing motion-stage synchronization problems.

Many $\mathcal{H}_2$-based structured synthesis methods for sparse controller design have been developed for large-scale interconnected systems. In \cite{Rotkowitz2005AControl, Matni2014RegularizationDesign} the property of Quadratic Invariance (QI) is exploited to enable convex synthesis of structured controllers, whereas \cite{Polyak2013AnSystems, Lin2013DesignMultipliers, Fardad2014OnProgramming} address the general non-QI case. More recently, \cite{Wang2019ASynthesis} introduced a system-level synthesis framework that generalizes structured controller design for a large class of convex performance objectives. However, these methods require a full redesign of the controller, which is often undesirable in high-tech applications where subsystems are designed independently and integrated only at a later stage, requiring the coupling to be implemented as add on.

Although important progress has been made in the control of interconnected 
systems, general methods that build directly on existing subsystem controllers 
to improve system-level performance remains lacking. This paper introduces 
a system-level add-on control framework for systems of multiple locally 
controlled subsystems. The main contributions are:
\begin{itemize}
    \item[C1] A Youla-based system-level add-on controller parameterization 
    for systems of multiple locally controlled subsystems, enabling 
    system-level performance improvement while keeping the existing 
    local controllers fixed.
    \item[C2] A convex $\mathcal{H}_2$ synthesis procedure for sparse 
    system-level controllers, enabling an optimal trade-off between 
    system-level performance and interconnection complexity.
\end{itemize}
The parameterization in C1 generalizes the two-subsystem Youla-based 
framework introduced in \cite{Evers2019BeyondMotion} to systems of 
arbitrary size. This paper is organized as follows. Section~II 
introduces notation and preliminaries. Section~III formulates the 
add-on control problem. Section~IV develops the system-level controller 
parameterization and performance analysis. Section~V presents the 
synthesis procedure. Section~VI provides simulation results, and 
Section~VII presents concluding remarks.

\section{Notation \& preliminaries}  The set of real numbers is denoted by $\mathbb{R}$. For a matrix $  {A} $, its transpose is written as $  {A}^{\top} $, and its Hermitian (conjugate transpose) as $  {A}^{\hop} $. The Kronecker product is represented by $ \otimes $. For $  {X} = [ {x}_1, \ldots,  {x}_n] $ with $  {x}_i \in \mathbb{C}^n $ the operation $ \operatorname{vec}( {X}) = [ {x}^\top_1, \ldots,  {x}^\top_n]^\top $ restructures the matrix into a vector by stacking its columns. The null space of a matrix $A$, i.e., the set of all vectors $x$ satisfying $Ax = 0$, is denoted by $\operatorname{ker}(A)$. The set of real rational transfer matrices of dimension $n \times m$ is denoted by $\mathcal{R}^{n \times m}$, and $\mathcal{RH}_{\infty}^{n \times m}$ denotes the subset of stable and proper transfer matrices. The Laplace operator $s$ is omitted if clear from context.  The following definitions are used \citep{Zhou1999EssentialsControl}. 
\begin{definition}[$\mathcal{H}_2$ norm]
For a stable, strictly proper LTI system \(G\), the \(\mathcal{H}_2\) norm is
defined as
\[
\| G \|_{\mathcal{H}_2}
=
\sqrt{
\frac{1}{2\pi}
\int_{-\infty}^{\infty}
\operatorname{tr}\!\left(
G^{\hop}(j\omega)G(j\omega)
\right)d\omega
}.
\]
\end{definition}

\begin{definition}[Right-coprime factorization]
Let $G \in \mathcal{R}^{n \times m}$.  
The ordered pair $\{N, D\}$, with $D \in \mathcal{RH}_{\infty}^{m \times m}$ invertible over $\mathcal{R}^{m \times m}$ and $N \in \mathcal{RH}_{\infty}^{n \times m}$, is called a right-coprime factorization (RCF) of $G$ if $G = N D^{-1}$ and there exist $L\in \mathcal{RH}_\infty^{m \times n}$ and $ F\in \mathcal{RH}_\infty^{m \times m}$ such that the Bézout identity holds
\[
L N + F D = I.
\]
\end{definition}
\begin{definition}[Lower linear fractional transformation]
The lower linear fractional transformation (LFT) of a partitioned transfer matrix \(M = \begin{bmatrix} M_{11} & M_{12} \\ M_{21} & M_{22} \end{bmatrix}\) with respect to a transfer matrix \(K\) is defined as \(\mathcal{F}_\ell(M, K) = M_{11} + M_{12} K (I - M_{22} K)^{-1} M_{21}\). The LFT is well-posed if \(I - M_{22} K\) is invertible almost everywhere in $\mathbb{C}$.
\end{definition}

\section{Problem formulation}
This section introduces the control setting and formulates the coupling control design problem addressed in this paper.

\subsection{Baseline system description}
Consider the baseline system \(\Sigma ^0\) consisting of \(N\) locally controlled subsystems, described by
\begin{equation} \label{eq: baseline S}
\Sigma^0:\
\left\{
\begin{aligned}
    y &= G(u + d_u) + d_y, \\
    e &= r-y-n, \\
    u &= K_0 e,
\end{aligned}
\right.
\end{equation}
where \(u \in \mathbb{R}^N\) denotes the control input, \(y \in \mathbb{R}^N\) the 
measured output, \(r \in \mathbb{R}^N\) the reference signal, input disturbance \(d_u \in \mathbb{R}^N\), output disturbance \(d_y \in \mathbb{R}^N\), \(n \in \mathbb{R}^N\) the sensor noise, and \(e \in \mathbb{R}^N\) 
the local error signal. The plant \(G \in \mathcal{R}^{N \times N}\) is defined by
\[
G = \operatorname{diag}(G_{11},\dots,G_{NN}),
\]
where each \(G_{ii}\) represents the local subsystem dynamics and dynamic interactions between subsystems are assumed negligible. The interconnected system is controlled by the known and fixed decentralized controller \( K_0 \in \mathcal{RH}_{\infty}^{N \times N} \), given by
\[
K_0 = \operatorname{diag}(K_{11},\dots,K_{NN}),
\]
where \( K_{ii} \) denotes the local controller of subsystem \( i \), designed to stabilize the closed-loop system and achieve local performance specifications. The disturbance \(d = [d^\top_y, d^\top_u]^\top\) is generated as \(d = Vd_{\mathcal S}\), where \(V \in \mathcal{RH}_{\infty}^{2N \times 2N}\) denotes a full normal-rank disturbance model and $d_{\mathcal{S}}\in \mathbb{R}^{2N}$ the system-level disturbance. The performance of the baseline system is characterized by the system-level error signal \(e_{\mathcal S} \in \mathbb{R}^{n_w}\), defined as
\begin{equation}
    e_{\mathcal S} = W e,
\end{equation}
where \(W \in \mathbb{R}^{n_w \times N}\), with \(\operatorname{rank}(W)=r_w<N\), is a static matrix that encodes the system-level objective across the local subsystems. Since \(W\) has reduced rank, the system-level error $e_{\mathcal{S}}$ depends only on a lower-dimensional projection of the local error vector~\(e\). \\

\begin{remark}
    The signal definitions and dimensions are stated for SISO subsystems, but extend directly to the MIMO subsystems case. 
\end{remark}

\subsection{System-level add-on controller design problem}
To improve system-level performance, the baseline controller \(K_0\) is 
augmented with an add-on system-level controller, yielding the interconnected 
system \(\Sigma\). Since \(W\) has a nontrivial null space, system-level 
performance can be improved by redistributing the local error vector \(e\) 
to align with \(\ker(W)\), thereby reducing the component observed in 
\(e_{\mathcal S}\), beyond what is achievable by \(K_0\) alone. This leads 
to the following design problem. \\

\begin{problem} \label{prob:1}
Given the baseline system \(\Sigma^0\) with fixed local controller \(K_0\), 
design an add-on system-level controller that minimizes \(e_{\mathcal S}\), 
while satisfying:
\begin{enumerate}
\renewcommand{\labelenumi}{\arabic{enumi}.}
    \item Disabling the add-on controller recovers \(\Sigma^0\).
    \item Internal stability of \(\Sigma^0\) is preserved in \(\Sigma\).
\end{enumerate}
\end{problem}

\section{System-level add-on control framework}
This section develops the system-level add-on controller design framework, 
thereby providing Contribution~C1.

\subsection{Youla--Ku\v{c}era parameterizations}
First, the local subsystems are represented in 
generalized plant form as
\begin{equation} \label{eq:standard_plant}
\mathcal{P}:\
\begin{bmatrix}
z_{\mathcal L}\\
e
\end{bmatrix}
=
\begin{bmatrix}
P_{11} & P_{12}\\
P_{21} & P_{22}
\end{bmatrix}
\begin{bmatrix}
w_{\mathcal{L}}\\
u
\end{bmatrix},
\end{equation}
where \(w_{\mathcal{L}}=[d^\top,r^\top,n^\top]^\top\) are the local exogenous 
inputs, $ z_{\mathcal L} = [e^\top, \bar z^\top_{\mathcal L}]^\top$, with \(\bar z_{\mathcal L}\) denoting additional local performance channels, and \(P_{11}\), \(P_{12}\), \(P_{21}\) are block diagonal with \(P_{22}=-G\). The local channels are mapped to 
system-level channels via \(z_{\mathcal S} = \mathcal W  z_{\mathcal L}\) and
\(w_{\mathcal L} = \mathcal V w_{\mathcal S}\), where
\begin{equation}
    \mathcal W=
    \begin{bmatrix}
        W & 0 \\
        0 & I
    \end{bmatrix},
    \quad
    \mathcal V =
    \begin{bmatrix}
        V & 0 \\
        0 & I
    \end{bmatrix}.
\end{equation}
The add-on controller is parameterized using the Youla--Ku\v{c}era 
framework, which characterizes all controller perturbations that internally stabilize the generalized plant $\mathcal{P}$ \citep{Zhou1999EssentialsControl}. Let $\{N_c, D_c\}$ be an RCF of the baseline controller $K_0$, and let $\{N_p, D_p\}$ be an RCF of a model of the true plant $\hat G$, where all coprime factors are diagonal. The standard Youla--Ku\v{c}era parameterization in LFT form is then given by
\begin{equation} \label{eq:youla_lft}
\begin{aligned}
\vcenter{\hbox{$\mathcal{K} :$}}\,
\begin{bmatrix}
u\\
\epsilon
\end{bmatrix}
&=
\begin{bmatrix}
N_cD_c^{-1} & -(D_p+N_cD_c^{-1}N_p)\\
D_c^{-1} & -D_c^{-1}N_p
\end{bmatrix}
\begin{bmatrix}
e\\
\xi
\end{bmatrix}, \\
\xi &= X\epsilon,
\end{aligned}
\end{equation}
where \(\epsilon \in \mathbb{R}^N\) and \(\xi \in \mathbb{R}^N\) denote the internal coupling signals that are physically exchanged between subsystems. 
The Youla parameter \(X\) is restricted to a static matrix \(X \in \mathbb{R}^{N\times N}\), which determines the controller interconnection structure. Specifically, \(X\) is structured according to
\begin{equation} \label{eq:X}
X =
\begin{bmatrix}
0      & X_{12} & \cdots & X_{1N} \\
X_{21} & 0      & \cdots & \vdots \\
\vdots & \vdots & \ddots & X_{N-1,N} \\
X_{N1} & \cdots & X_{N,N-1} & 0
\end{bmatrix},
\end{equation}
where each nonzero entry \(X_{ij}\) represents an interconnection from subsystem \(j\) to subsystem \(i\). Evaluating the LFT interconnection 
\(\bar K(X)=\mathcal{F}_{\ell}(\mathcal{K},X)\) yields a controller of the form
\[
\bar K(X)=K_0+K_\Delta(X),
\]
where \(K_\Delta(X)\) denotes the add-on controller induced by the 
Youla parameter \(X\). Hence, \(\bar K(X)\) consists of the fixed baseline 
controller \(K_0\) augmented with a coupling term of which the interconnection 
structure is encoded by \(X\). The corresponding LFT interconnection is shown 
in Figure~\ref{fig:lft}.

\begin{figure}[t]
    \centering
    \includegraphics[scale=1]{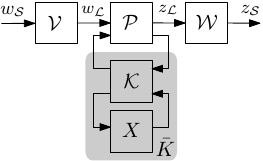}
\caption{LFT representation of the interconnected system with generalized plant \(\mathcal P\) and controller \(\bar K(X)\), where \(X\) encodes the interconnection structure between subsystems.}
    \label{fig:lft}
\end{figure}

\subsection{System-level add-on controller parameterization}
The generic Youla--Ku\v{c}era representation in~\eqref{eq:youla_lft} is 
used to derive an add-on controller parameterization by 
selecting suitable RCFs for \(K_0\) and the plant model \(\hat G\).  Since the local controller \(K_0\) is stable, a natural RCF is given by
\begin{equation} \label{eq:coprime_K}
\{N_c, D_c\} = \{K_0, I\}.
\end{equation}
The plant model \(\hat G\) may be unstable or contain non minimum-phase zeros. Therefore, the RCF is selected as
\begin{equation}  \label{eq:coprime_G}
\{N_p, D_p\} = \{Z, \hat G^{-1} Z\},
\end{equation}
where \(Z \in \mathcal{RH}_{\infty}^{N \times N}\) is a diagonal transfer function matrix which is chosen such that \(\hat G^{-1} Z \in \mathcal{RH}_{\infty}^{N \times N}\). In particular, \(Z\) collects the non minimum-phase zeros of \(\hat G\), and may include additional stable poles to ensure $\hat G^{-1}Z$ is proper. The following theorem presents the resulting add-on system-level controller parameterization, establishing the main result of this section.  \\

\begin{theorem}[Nominal add-on controller parameterization]\label{thm:theorem_1}
Consider the generalized plant \(\mathcal P\) in~\eqref{eq:standard_plant}, with 
RCFs given by~\eqref{eq:coprime_K} and~\eqref{eq:coprime_G}, and \(X\) as 
in~\eqref{eq:X}, and assume that $\hat G = G$. Then, the set of stabilizing system-level add-on controllers are parameterized by
\begin{equation} \label{eq: lft controller}
\begin{aligned}
\mathcal K:
\begin{bmatrix}
u\\
\epsilon
\end{bmatrix}
&=
\begin{bmatrix}
K_0 & -(S\hat G)^{-1}Z\\
I & -Z
\end{bmatrix}
\begin{bmatrix}
e\\
\xi
\end{bmatrix},\\
\xi&=X\epsilon,
\end{aligned}
\end{equation}
where \(S=(I+\hat G K_0)^{-1}\). The resulting controller \(\bar K(X)=\mathcal F_\ell(\mathcal K,X)\) satisfies
\[
\bar K(X) = K_0 - (S\hat G)^{-1}ZX(I+ZX)^{-1},
\]
and yields the closed-loop interconnection \(\mathcal{F}_\ell(\mathcal{P}, \bar K(X))\) with the following properties
\begin{enumerate}[label=(\roman*)] 
    \item \(\mathcal{F}_\ell(\mathcal{P}, \bar K(X))\) is affine in~\(X\).
    \item \(\mathcal{F}_\ell(\mathcal{P}, \bar K(X))\) is well-posed and internally stable.
    \item Each block $[\mathcal F_{\ell}(\mathcal P,\bar K(X))]_{ij}$, with \((i,j)\) the input-output channel indices, satisfies
    \[
    \operatorname{diag}\!\left(
    [\mathcal F_{\ell}(\mathcal P,\bar K(X))]_{ij}
    \right)
    =
    \operatorname{diag}\!\left(
    [\mathcal F_{\ell}(\mathcal P,K_0)]_{ij}
    \right).
    \]
    \item The interconnection signal $\epsilon$ satisfies \(
\epsilon = e_0,
\)
with \(e_0\) the local error signal of the baseline system.
\end{enumerate}
\end{theorem}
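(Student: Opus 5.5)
The plan is to substitute the chosen coprime factors into the generic Youla--Ku\v{c}era LFT~\eqref{eq:youla_lft} and then compute the closed loop $\mathcal F_\ell(\mathcal P,\bar K(X))$ explicitly. The computation relies on one fact: all of $G=\hat G$, $K_0$, $Z$, $S$ and the blocks of $\mathcal P$ are diagonal, so they commute with one another. They do not, however, commute with $X$.

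\textbf{Step 1: the form of $\mathcal K$.} With $\{N_c,D_c\}=\{K_0,I\}$ and $\{N_p,D_p\}=\{Z,\hat G^{-1}Z\}$ we get $N_cD_c^{-1}=K_0$, $D_c^{-1}=I$ and $D_c^{-1}N_p=Z$. For the remaining block,
\[
D_p+K_0Z=\hat G^{-1}(I+\hat GK_0)Z=(S\hat G)^{-1}Z ,
\]
which gives the matrix in~\eqref{eq: lft controller}. Closing the loop with $\xi=X\epsilon$ gives $\epsilon=e-ZX\epsilon$, so $\epsilon=(I+ZX)^{-1}e$. Substituting into the $u$-equation yields the stated expression for $\bar K(X)$.

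Well-posedness requires $I+ZX$ to be invertible almost everywhere. I expect this to be the delicate point. My first attempt would be to invoke the well-posedness condition of the standard Youla theorem \citep{Zhou1999EssentialsControl}, namely $\det(I+D_c^{-1}N_pX)(\infty)\neq 0$. Alternatively, since $\det(I+Z(s)X)$ is a rational function, it suffices to show it is not identically zero, which should follow from continuity at any point where $Z$ is small. Failing both, I would state this as a standing assumption on $X$.

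\textbf{Step 2: closed-loop signals and items (iv) and (i).} Write $e=P_{21}w_{\mathcal L}-Gu$ and $u=K_0e-(SG)^{-1}Z\xi$. Since $G(SG)^{-1}=S^{-1}=I+GK_0$ by diagonality, the loop equation becomes
\[
(I+GK_0)e=P_{21}w_{\mathcal L}+(I+GK_0)Z\xi ,
\]
so $e=SP_{21}w_{\mathcal L}+Z\xi$. Hence
\[
\epsilon=e-Z\xi=SP_{21}w_{\mathcal L}=e_0 ,
\]
independently of $X$, which proves (iv). Next, $\xi=XSP_{21}w_{\mathcal L}$. Using the identity $K_0-(SG)^{-1}=-G^{-1}$,
\[
u=K_0SP_{21}w_{\mathcal L}-G^{-1}ZXSP_{21}w_{\mathcal L} .
\]
Therefore
\[
\mathcal F_\ell(\mathcal P,\bar K(X))=\mathcal F_\ell(\mathcal P,K_0)-P_{12}\,G^{-1}Z\,X\,S\,P_{21},
\]
which is affine in $X$, proving (i).

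\textbf{Step 3: internal stability, item (ii).} I would inject test signals at $u$, $e$, $\xi$ and $\epsilon$ and repeat the computation above. Every resulting map is a product of stable factors: $S$, $K_0S$, $SG$, $G^{-1}Z=D_p$, $Z$, and the constant $X$. The only potential issue is an unstable $G$, but it enters only through $SG$ and $D_p$, both stable because $K_0$ stabilizes the baseline loop. Alternatively, one can cite the standard Youla result that any stable $X$ in~\eqref{eq:youla_lft} gives an internally stable loop. Since static $X$ is a subset of $\mathcal{RH}_\infty$, this also justifies the claim that these controllers form a set of stabilizing add-on controllers, while disabling the add-on ($X=0$) recovers $K_0$.

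\textbf{Step 4: item (iii).} Take any input-output channel pair $(i,j)$. The corresponding block of the correction term has the form $A\,X\,B$, where $A=[P_{12}]_i G^{-1}Z$ and $B=S[P_{21}]_j$ are diagonal, because $P_{12}$ and $P_{21}$ are block diagonal per subsystem. Then $(AXB)_{kk}=A_{kk}X_{kk}B_{kk}=0$, since $X$ has zero diagonal by~\eqref{eq:X}. So the diagonal of each block equals that of $\mathcal F_\ell(\mathcal P,K_0)$.
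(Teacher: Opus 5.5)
Your proposal is correct. For (iii) it is the same argument as the paper's; for (i), (ii) and (iv) you take a somewhat different route.

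\textbf{How the routes differ.} The paper first eliminates $\xi$ and forms $\bar K(X)$. It then computes the return difference $I-P_{22}\bar K(X)=S^{-1}(I+ZX)^{-1}$ and substitutes $(I-P_{22}\bar K(X))^{-1}=(I+ZX)S$ into the LFT to obtain the affine form. Item (iv) is read off afterwards from $\epsilon=(I+ZX)^{-1}e$ and $e=(I+ZX)e_0$. You keep $\xi$ as an explicit signal and use the classical Youla property that the map $\xi\mapsto\epsilon$ through $\mathcal P$ and $\mathcal K$ vanishes, since $e=SP_{21}w_{\mathcal L}+Z\xi$. So (iv) comes first and (i) follows from it.

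\textbf{What each route buys.} Your ordering gives the closed-loop maps without ever inverting $I+ZX$. This shows that only the single-block realization $\bar K(X)$ depends on that inverse, not the closed loop. It matters in the paper's own setting: in Theorem~\ref{theorem 2}, any $X$ with $W(I+X)=0$ makes $I+X$ singular, because $r_w\ge 1$. There only your $\xi$-explicit formula is meaningful. Your ordering also sets up a direct test-signal check of (ii), where the paper simply cites the Youla construction. The paper's route is shorter and works directly with the controller $\bar K(X)$ that the statement names. Its identity $(I-P_{22}\bar K)^{-1}=(I+ZX)S$ is also exactly what populates the transfer table.

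\textbf{Well-posedness.} Your second option works when $Z$ is strictly proper, since then $\det(I+Z(s)X)\to 1$ as $|s|\to\infty$. This is automatic whenever all $G_{ii}$ are strictly proper, as for the $1/(M_{ii}s^2)$ plants in the simulations. It cannot work in general. For biproper $Z$, such as $Z=I$ in the performance analysis, take $N=2$ and $X_{12}=X_{21}=1$: then $I+ZX$ is singular and $\bar K(X)$ does not exist. So $\det(I+Z(\infty)X)\neq 0$ is a genuine standing assumption. The paper carries it implicitly through its citation of the Youla theorem, so your fallback of stating it as an assumption is the right resolution.

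\textbf{A minor correction in Step 3.} $D_p=\hat G^{-1}Z$ is stable by the choice of $Z$, not because $K_0$ stabilizes the baseline loop. Only $SG$ (and $S$, $K_0S$) are stable for that reason.
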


\begin{proof}
Substituting~\eqref{eq:coprime_K} and~\eqref{eq:coprime_G} into~\eqref{eq:youla_lft} 
gives \(N_cD_c^{-1}=K_0\), \(D_c^{-1}=I\), \(D_c^{-1}N_p=Z\), and
\(D_p+N_cD_c^{-1}N_p = \hat G^{-1}Z+K_0Z = (S\hat G)^{-1}Z\),
with \(S=(I+\hat G K_0)^{-1}\), yielding the controller expression 
in~\eqref{eq: lft controller}. To establish the closed-loop transfer function, 
note that for \(P_{22}= -G\),
\[
\begin{aligned}
I-P_{22}\bar K(X)
&= I+ G K_0 - S^{-1}ZX(I+ZX)^{-1} \\
&= S^{-1} - S^{-1}ZX(I+ZX)^{-1} \\
&= S^{-1}\bigl(I-ZX(I+ZX)^{-1}\bigr) \\
&= S^{-1}(I+ZX)^{-1}.
\end{aligned}
\]
and therefore \(\bigl(I-P_{22}\bar K(X)\bigr)^{-1} = (I+ZX)S\). 
Substituting into the LFT interconnection yields the affine form
\[
\mathcal F_{\ell}(\mathcal P,\bar K(X))
=
\bar T_{11} + \bar T_{12} X \bar T_{21},
\]
where \(\bar T_{11} = P_{11}+P_{12}K_0SP_{21} = \mathcal F_{\ell}(\mathcal P, K_0)\), 
\(\bar T_{12} = -P_{12}\hat G^{-1}Z\), and \(\bar T_{21} = SP_{21}\). 
Statement~(i) follows by inspection. Statement~(ii) follows from the Youla 
construction~\citep{Zhou1999EssentialsControl}. Statement~(iii) follows since 
\(\bar T_{12}\), \(\bar T_{21}\) are diagonal and \(X\) is hollow. 
Statement~(iv) follows from \(\epsilon = (I+ZX)^{-1}e\) combined with 
\(e = (I+ZX)e_0\), where \(e_0\) denotes the baseline error. 
\end{proof}
The resulting control interconnection is shown in Figure~\ref{fig:extended_control_structure}, with the closed-loop transfer 
functions summarized in~\eqref{tab: transfers}, where $T = I - S$, 
\(S_{\mathrm{i}} = (I + K_0 G)^{-1}\), and $T_{\mathrm{i}} = I - S_{\mathrm{i}}$. Each transfer function consists of the diagonal baseline closed-loop transfer function augmented by a coupling term affine in \(X\). The add-on controller acts through the factor \((I+ZX)\), reshaping the baseline local error vector before projection by \(W\). Since \(W\) has a nontrivial null space, this freedom can be exploited to align the local error vector with \(\ker(W)\), reducing the components that contribute 
to \(e_{\mathcal S}\). Furthermore, Theorem~\ref{thm:theorem_1} establishes 
that the parameterization satisfies both requirements of Problem~\ref{prob:1}, since setting \(X=0\) directly recovers \(\Sigma^0\), and Statement~(\textit{ii}) guarantees internal stability of \(\Sigma\) for any~\(X\).

\begin{figure}[t]
    \centering
    \includegraphics[width=\linewidth]{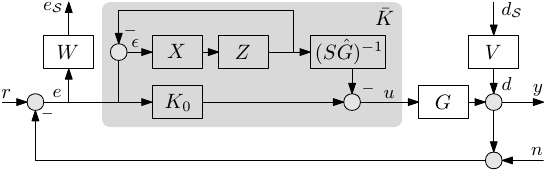}
    \caption{Control interconnection consisting of the baseline controllers \(K_0\) and the add-on system-level controllers.}

    \label{fig:extended_control_structure}
\end{figure}

\begin{figure*}[b!] 
\centering
\begingroup
\setlength{\abovedisplayskip}{6pt}
\setlength{\belowdisplayskip}{6pt}
\begin{equation} \label{tab: transfers}
\left[\begin{array}{c}
e_{\mathcal{S}} \\ e \\ \epsilon \\ y \\ u
\end{array}\right]
=
\left[\begin{array}{c|c|c|c}
-W(I+ZX) S & -W(I+ZX) S G & W(I+ZX) S & -W(I+ZX) S \\
-(I+ZX) S & -(I+ZX) S G & (I+ZX) S & -(I+ZX) S \\
-S & -SG & S & -S \\
(I+ZX) S & (I+ZX) S G & T-ZXS & -(T-ZXS) \\
-(K_0-G^{-1}ZX)S & -T_{\mathrm{i}}+G^{-1}ZXSG & (K_0-G^{-1}ZX)S & -(K_0-G^{-1}ZX)S 
\end{array}\right]
\left[\begin{array}{c}
d_y \\ d_u \\ r \\ n
\end{array}\right]
\end{equation}
\endgroup
\end{figure*}



\subsection{Performance analysis} \label{sec: analysis}
Next, the conditions under which the transfer function matrix from 
disturbances $d$ to the system-level error $e_{\mathcal{S}}$, denoted 
$T_{e_{\mathcal{S}}d} : d \mapsto e_{\mathcal{S}}$, can be set to zero 
using the add-on controller, i.e., $T_{e_{\mathcal{S}}d} = 0$, are 
characterized.  This analysis is conducted under the assumption \(G^{-1} \in \mathcal{RH}^{N \times N}_{\infty}\), which allows \(Z = I\). The following theorem states the conditions on the performance matrix \(W\) under which this is achievable. \\

\begin{theorem} \label{theorem 2}
Let \(G^{-1} \in \mathcal{RH}^{N \times N}_{\infty}\), \(Z = I\) and \(X \in \mathbb{R}^{N\times N}\)  with \(\operatorname{diag}(X)=0\) and $N$ the number of subsystems. There exist static couplings \(X\) such that \(T_{e_{\mathcal{S}} d}(X) = 0\) if and only if there exists a vector $v \in \operatorname{ker}(W)$ with $v_j \neq 0$, $j \in \{1,\ldots,N\}$.
\end{theorem}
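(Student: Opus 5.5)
The plan is to reduce $T_{e_{\mathcal S}d}(X)=0$ to a purely algebraic condition on the static matrix $I+X$, and then to read that condition column by column. With $Z=I$, the first block row of~\eqref{tab: transfers} gives
\[
T_{e_{\mathcal S}d}(X) = -W(I+X)S\begin{bmatrix} I & G\end{bmatrix}V .
\]
Here $S=(I+GK_0)^{-1}$ is invertible over $\mathcal R^{N\times N}$, since $I+GK_0$ is. The factor $\begin{bmatrix} I & G\end{bmatrix}$ has full row normal rank $N$, and $V$ is square with full normal rank. Hence $S\begin{bmatrix} I & G\end{bmatrix}V$ has full row normal rank $N$ and admits a right inverse over $\mathcal R$. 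Since $W(I+X)$ is a constant matrix, we conclude that $T_{e_{\mathcal S}d}(X)=0$ if and only if $W(I+X)=0$. This reduction is the first step; its only delicate point is justifying the rank argument for rational matrices, which I will do at a single frequency where all factors are nonsingular.

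The second step is to interpret $W(I+X)=0$ columnwise. Let $c_j$ denote the $j$-th column of $I+X$. Because $\operatorname{diag}(X)=0$, we have $(c_j)_j=1$, while the remaining entries $(c_j)_i$, $i\neq j$, are free (they are the $X_{ij}$). The condition becomes $Wc_j=0$ for every $j$. Thus a hollow $X$ achieving zero exists if and only if, for each $j\in\{1,\dots,N\}$, $\ker(W)$ contains a vector whose $j$-th entry equals $1$. By scaling, this is the same as a vector $v\in\ker(W)$ with $v_j\neq 0$.

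\emph{Sufficiency.} Given such $v^{(j)}$ for each $j$, set $c_j=v^{(j)}/v^{(j)}_j$ and define $X=[c_1,\dots,c_N]-I$. This $X$ is hollow and satisfies $W(I+X)=0$.

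\emph{Necessity.} If $W(I+X)=0$ for some hollow $X$, then each column $c_j$ is itself a kernel vector with nonzero $j$-th entry.

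If the statement is instead read as requiring a single $v\in\ker(W)$ with all entries nonzero, I will close the gap with a genericity argument. The set $\{v\in\ker W: v_j=0\}$ is a proper subspace of $\ker W$ for each $j$. A finite union of proper subspaces cannot cover $\ker W$, so a generic linear combination of the $v^{(j)}$ has every entry nonzero. Conversely, one such $v$ serves for every $j$ simultaneously.

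I expect the main obstacle to be the first step: cancelling $S\begin{bmatrix} I & G\end{bmatrix}V$ rigorously in the rational-matrix setting. Everything after that is linear algebra on constant matrices.
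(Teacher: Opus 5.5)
Your proposal is correct and follows the paper's route. You reduce $T_{e_{\mathcal S}d}(X)=0$ to the constant condition $W(I+X)=0$, then argue column by column, using the unit diagonal of $I+X$ for necessity and normalized kernel vectors for sufficiency. You also supply two steps the paper leaves implicit: the normal-rank justification for cancelling $S\begin{bmatrix} I & G\end{bmatrix}V$, and the finite-union-of-proper-subspaces argument showing that the per-$j$ and single-$v$ readings of the kernel condition coincide.
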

\begin{proof}
For $ Z=I$, the condition \(T_{e_{\mathcal{S}} d} = 0\) is equivalent to \(W(I + X) = 0\). This requires each column of $I+X$ to be an element of the null space $\operatorname{ker}(W)$. Since the diagonal of $I+X$ is fixed to one, the $j$th column of $I+X$ can only be an element of $\operatorname{ker}(W)$ if there exists $v \in \operatorname{ker}(W)$ with $v_j \neq 0$. If no such vector exists, the diagonal constraint cannot be satisfied, and no feasible $X$ exists. Conversely, if this condition holds for all \(j\), such an \(X\) can be constructed by choosing each column of \(I + X\) as a vector in \(\operatorname{ker}(W)\) normalized to have a unit entry at position \(j\). By construction, \(W(I + X) = 0\) and \(\operatorname{diag}(X) = 0\), completing the proof.
\end{proof}

\begin{corollary} \label{col 1}
Under the conditions of Theorem~\ref{theorem 2}, achieving 
\(T_{e_{\mathcal{S}} d}(X)  = 0\) imposes \(r_w N\) independent linear constraints on the coupling matrix \(X\). Consequently, since \(X\) contains \(N(N-1)\) free entries, the number of active couplings \(n_X\) required to achieve $T_{e_{\mathcal{S}}d} = 0$ must satisfy
\begin{equation}
    r_w N \leq n_X \leq N(N-1).
\end{equation}
\end{corollary}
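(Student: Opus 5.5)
The plan is to start from the reduction in the proof of Theorem~\ref{theorem 2}: with $Z=I$, the condition $T_{e_{\mathcal S}d}(X)=0$ is equivalent to $W(I+X)=0$. This reduction holds because $T_{e_{\mathcal S}d}=-W(I+X)S[I\;\,G]V$ by \eqref{tab: transfers}, the factor $S[I\;\,G]V$ has full normal row rank $N$, and $V$ has full normal rank. I will then vectorize. Using $\operatorname{vec}(AXB)=(B^\top\otimes A)\operatorname{vec}(X)$, the condition becomes
\[
(I_N\otimes W)\operatorname{vec}(X)=-\operatorname{vec}(W).
\]
This is a linear system in $\operatorname{vec}(X)$ with coefficient matrix $I_N\otimes W$, whose rank is $N\,\operatorname{rank}(W)=r_wN$. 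So it consists of $r_wN$ independent scalar equations: $r_w$ for each column $x_j$ of $X$, namely $Wx_j=-We_j$. Imposing the hollowness constraint $\operatorname{diag}(X)=0$ means deleting the entries $x_{jj}$. The $j$-th column then reads $W_{-j}\bar x_j=-w_j$, where $W_{-j}$ is $W$ with its $j$-th column removed and $\bar x_j\in\mathbb R^{N-1}$ holds the off-diagonal entries of column $j$.

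Next comes the counting step, which I expect to be the main obstacle. The claim ``$r_w$ independent constraints per column'' needs $\operatorname{rank}(W_{-j})=r_w$. I would argue it from the feasibility hypothesis of Theorem~\ref{theorem 2}. Feasibility means $w_j\in\operatorname{range}(W_{-j})$. Since $\operatorname{range}(W)=\operatorname{range}(W_{-j})+\operatorname{span}(w_j)$, it follows that $\operatorname{range}(W_{-j})=\operatorname{range}(W)$, so $W_{-j}$ has rank $r_w$. The $r_w$ equations on $\bar x_j$ are therefore independent and consistent. By block-diagonality the columns decouple, which gives $r_wN$ independent linear constraints on the $N(N-1)$ free entries of $X$.

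Finally, for the bound on $n_X$, I will argue column by column. Column $j$ of $X$ must contain at least $r_w$ nonzero entries. If fewer than $r_w$ entries were active, say on a support set $\mathcal I$ with $|\mathcal I|<r_w$, then $W_{-j}\bar x_j=\sum_{i\in\mathcal I}x_{ij}w_i$ spans at most a $(r_w-1)$-dimensional subspace. I need to exclude that $-w_j$ lies in it. The cleanest route is to interpret ``independent constraints'' generically, which is presumably the paper's intended reading: in the generic position implicit in the phrasing, any $r_w$ columns of $W$ are linearly independent, and then $w_j$ together with fewer than $r_w$ other columns cannot be linearly dependent. I would state this genericity explicitly, or alternatively present the lower bound as the degrees-of-freedom count: $r_wN$ independent equations require at least $r_wN$ unknowns. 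Summing the per-column lower bound over $j$ gives $n_X\ge r_wN$. The upper bound $n_X\le N(N-1)$ is immediate, since $X$ has exactly $N(N-1)$ off-diagonal entries.
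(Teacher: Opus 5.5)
Your reduction to $W(I+X)=0$, the vectorization with $\operatorname{rank}(I_N\otimes W)=r_wN$, and the observation that feasibility forces $\operatorname{rank}(W_{-j})=r_w$ are correct. Together they make the first sentence of the corollary rigorous. The gap is the lower bound $n_X\ge r_wN$, and it cannot be closed as stated, because the bound is false under the hypotheses of Theorem~\ref{theorem 2}.

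Take the paper's own matrix~\eqref{eq: W2}: $N=4$ and $r_w=2$, so $r_wN=8$. Now take the hollow $X$ with $X_{12}=X_{21}=X_{34}=X_{43}=1$ and all other entries zero. Then $I+X$ has columns $[1,1,0,0]^\top,[1,1,0,0]^\top,[0,0,1,1]^\top,[0,0,1,1]^\top$, all in $\ker(W)$. Hence $W(I+X)=0$ with $n_X=4$. In your notation $-w_1=w_2$, so column $1$ needs one nonzero entry, not $r_w=2$.

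Your degrees-of-freedom fallback is therefore not a valid step. A consistent linear system of rank $m$ can have solutions supported on fewer than $m$ unknowns whenever its right-hand side lies in the span of fewer than $m$ coefficient columns. That counting heuristic is exactly the justification the paper packs into the corollary itself. So the defect lies in the statement, not only in your write-up.

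The genericity route is the right repair, but it is an extra hypothesis on $W$. It is not implicit in the word ``independent'', which refers only to $\operatorname{rank}(I_N\otimes W)$ and holds for every $W$. State the hypothesis explicitly, in one of two forms:
\begin{enumerate}
\item every $r_w$ columns of $W$ are linearly independent (true almost surely for the randomly generated $W$ in the simulations); or
\item the weaker per-column condition that no $w_j$ lies in the span of any $r_w-1$ of the remaining columns.
\end{enumerate}
Under either form your column-wise argument is correct. Combined with $\operatorname{rank}(W_{-j})=r_w$, which lets you write $-w_j$ in a basis of $r_w$ columns of $W_{-j}$, it shows that the minimal number of couplings is exactly $r_wN$. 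Without the assumption, the same rank argument shows that $r_wN$ is in general an upper bound on the minimal $n_X$, not a lower bound.
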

Theorem~\ref{theorem 2} characterizes the ideal case in which the plant can be
stably and causally inverted. In this case, \(G^{-1} \in \mathcal{RH}_{\infty}^{N \times N}\), such that \(Z=I\), and a static coupling \(X\) provides sufficient freedom to achieve
\(T_{e_{\mathcal{S}} d}=0\). If \(G^{-1} \notin \mathcal{RH}_{\infty}^{N \times N}\), this direct cancellation is no longer admissible. The filter \(Z\) is then required to ensure \(G^{-1}Z \in \mathcal{RH}_{\infty}^{N \times N}\), and captures the stable and causal inversion limitations imposed by nonminimum-phase zeros and relative-degree constraints. Corollary~\ref{col 1} further shows that only a limited number of interconnections are typically required, implying that the available degrees of freedom can be used either to obtain sparse interconnection structures or address secondary objectives. The null-space condition in Theorem~\ref{theorem 2} implies that \(T_{e_{\mathcal{S}} d}=0\) is achievable only if \(W\) defines system-level objectives that enable redistribution among subsystems, rather than independent local objectives. This is illustrated in the following example. \\

\begin{example}
Consider the performance matrix
\begin{equation} \label{eq: W1}
W =
\begin{bmatrix}
1 & -1 & 0 \\
0 &  0 & 1 
\end{bmatrix},
\end{equation}
with \(\operatorname{ker}(W) = \operatorname{span}\{[1,  1,  0]^\top\}\). Since \(v_3 = 0\), the null space condition of Theorem~\ref{theorem 2} is not satisfied, and no static coupling \(X\) exists such that \(T_{e_{\mathcal{S}} d}=0\). This performance matrix represents a coupled objective for subsystems~\(j=1,2\) and an independent, local objective for subsystem~\(j=3\), which will always remain nonzero in the system-level error $e_{\mathcal{S}}$. In contrast, for
\begin{equation} \label{eq: W2}
W =
\begin{bmatrix}
1 & -1 & 0 & 0\\
0 &  0 & 1 & -1
\end{bmatrix},
\end{equation}
with \(\operatorname{ker}(W) = \operatorname{span}\{[1,  1,  0,  0]^\top, [0,  0,  1,  1]^\top\}\), each coordinate contains a nonzero entry. Hence, the null space condition is satisfied, and a static coupling \(X\) exists that achieves \(T_{e_{\mathcal{S}}d}=0\). In this case, $W$ describes a coupled objective for subsystem $j=1,2$ and $j= 3,4$, which enables the contribution of each subsystem to be redistributed among subsystems.
\end{example}

\section{System-level add-on controller synthesis}
This section develops the synthesis procedure for sparse system-level 
controllers, providing Contribution~C2.

\subsection{Sparse $\mathcal{H}_2$ synthesis problem}
The performance criterion is defined as the squared \(\mathcal{H}_2\) norm
of the closed-loop transfer function,
\begin{equation} \label{eq: cost function}
\mathcal{J}(X) = \bigl\|T_{zw}(X)\bigr\|_{\mathcal{H}_2}^2,
\end{equation}
with \(T_{zw}(X)=\mathcal W\,\mathcal F_{\ell}(\mathcal P,\bar K(X))\mathcal{V}\),
as illustrated in Figure~\ref{fig:lft}. The synthesis problem is formulated as
\begin{equation} \label{eq: problem}
\begin{aligned}
\hat X = \argmin_{X \in \mathbb{R}^{N \times N}} \quad & \mathcal{J}(X)
+ \lambda \|\operatorname{vec}(X)\|_1, \\
\text{s.t.} \quad & \operatorname{diag}(X) = 0,
\end{aligned}
\end{equation}
where \(\lambda > 0\) is the regularization parameter. The \(\ell_1\) regularization is included to reduce the interconnection complexity of \(X\), promoting sparse interconnection structures and selecting the interconnections most effective in improving performance. The problem is thus to compute sparse solutions \(\hat X\) that minimize~\eqref{eq: problem}, given \(\hat G\), \(K_0\), \(W\), \(V\), and~\(\lambda\).

\subsection{LASSO reformulation}
This section derives the synthesis procedure for solving~\eqref{eq: problem}. By Theorem~\ref{thm:theorem_1}, the closed-loop map is affine in \(X\).
Consequently, the closed-loop transfer function \(T_{zw}(X)\) can be written as
\begin{equation} \label{eq:Tzw_1}
T_{zw}(X) = T^0_{zw} + T_{12} X T_{21},
\end{equation}
with
\[
\begin{aligned}
T^0_{zw} &= \mathcal W \mathcal F_\ell(\mathcal P, K_0) \mathcal V, \\
T_{12} &= -\mathcal W P_{12} \hat G^{-1} Z, \\
T_{21} &= S P_{21} \mathcal V,
\end{aligned}
\] 
where $T^0_{zw}$ is the closed-loop transfer function of the baseline system $\Sigma^0$. Vectorizing~\eqref{eq:Tzw_1}, the cost function~\eqref{eq: cost function} becomes
\begin{equation} \label{eq: cost combined}
    \mathcal{J}(X)
    = \left\| l + \mathcal{A} x
    \right\|_{\mathcal{H}_2}^2,
\end{equation}
with $x = \operatorname{vec}(X)$, $l = \operatorname{vec}(T^0_{zw})$, and $\mathcal{A} = T_{21}^\top \otimes T_{12}$. Expanding the $\mathcal{H}_2$ norm yields the quadratic form
\begin{equation} \label{eq: quadratic cost}
\mathcal{J}(X) = x^\top Q x + 2 x^\top f + c,
\end{equation}
where $c$ is a constant and
\begin{align}
Q &= \frac{1}{2\pi} \int_{-\infty}^{\infty}
         \mathcal{A}^\hop(j\omega) \mathcal{A}(j\omega) \, d\omega, \label{eq: Q} \\
f &= \frac{1}{2\pi} \int_{-\infty}^{\infty}
         \mathcal{A}^\hop(j\omega) \, l(j\omega) \, d\omega, \label{eq: f}
\end{align}
with $Q \in \mathbb{R}^{N^2 \times N^2}$ positive definite and $f \in \mathbb{R}^{N^2}$. The diagonal constraint $\operatorname{diag}(X) = 0$ is eliminated by the variable transformation
\begin{equation} \label{eq: param transform}
    x = \Gamma \theta,
\end{equation}
where $\Gamma \in \mathbb{R}^{N^2 \times N(N-1)}$ is a selection matrix that removes the diagonal entries. Substituting into~\eqref{eq: quadratic cost} and applying the Cholesky decomposition $\tilde{Q} = A^\top A$ with $\tilde{Q} = \Gamma^\top Q \Gamma$, $\tilde{f} = \Gamma^\top f$, and $b = -A^{-\top} \tilde{f}$, problem~\eqref{eq: problem} reduces to the standard LASSO form~\citep{Tibshirani1996RegressionLasso}
\begin{equation} \label{eq: problem LARS}
    \hat{\theta}
    = \argmin_{\theta \in \mathbb{R}^{N(N-1)}}
    \|A\theta - b\|_2^2
    + \lambda \|\theta\|_1.
\end{equation}
The solution along the entire regularization path of $\lambda$ is efficiently computed using the LARS algorithm~\citep{Efron2004LeastRegressions}, yielding $\hat{\theta}$ for any specified number of active interconnections $n_X$. A second unregularized least-squares problem is then solved on the support of $\hat{\theta}$ to remove the regularization bias. The complete procedure is summarized in Algorithm~\ref{alg: synthesis}.

\begin{algorithm}[t]
\caption{Sparse add-on controller synthesis}
\label{alg: synthesis}
\begin{algorithmic}[1]
\Require $\hat G$, $K_0$, $W$, $V$, $n_X$
\State Construct $l$ and $\mathcal{A}$ 
\State Compute $Q$ and $f$ as defined in~\eqref{eq: Q} and~\eqref{eq: f}, and apply the parameter transformation given in~\eqref{eq: param transform}.
\State Obtain $A$ and $b$ from the Cholesky factorization of $\tilde{Q}$.
\State Solve~\eqref{eq: problem LARS} using LARS to obtain $\hat{\theta}$ with $n_X$ active parameters
\State De-bias by solving an unregularized LS problem on the support of $\hat{\theta}$
\Ensure sparse coupling $\hat X$
\end{algorithmic}
\end{algorithm}

\section{Simulation results}
The synthesis procedure is evaluated on a system of \(N=18\) locally controlled subsystems. Each local plant is given by \(G_{ii}(s)=1/(M_{ii} s^2)\), where the masses \(M_{ii}\) are randomly selected in the range \(35\)–\(40~\mathrm{kg}\). The subsystems are controlled by fixed PID-type controllers with closed-loop bandwidths between \(25\) and \(75~\mathrm{Hz}\). The disturbance model \(V\) is chosen as a low-pass filter with a cut-off frequency of \(50\,\mathrm{Hz}\), and the system-level performance objective is defined by a randomly generated matrix \(W \in \mathbb{R}^{3 \times 18}\), corresponding to three combined performance objectives. In addition to minimizing the system-level error, the synthesis includes control action as a secondary objective.

Figure~\ref{fig: pareto} shows the normalized cost \(\mathcal{J}(X)/\mathcal{J}_0\) as a function of the number of active interconnections \(n_X\). The results show that most interconnections contribute only marginally to performance improvement, since the cost rapidly approaches its minimum for sparse interconnection matrices. The selected sparsity levels \(n_X \in \{30,45,100\}\) are visualized in Figure ~\ref{fig: adjacency}, which shows the corresponding interconnection structures of $\hat X$. Figure~\ref{fig: frf} compares the Frobenius norm of \(T_{e_{\mathcal S}d}\) for the baseline system, the sparse solutions, and the dense solution. The sparse solutions closely match the dense solution, confirming that only a limited number of critical interconnections are required to achieve near-optimal performance.

\begin{figure*}
    \centering
    \includegraphics[width=0.89\linewidth]{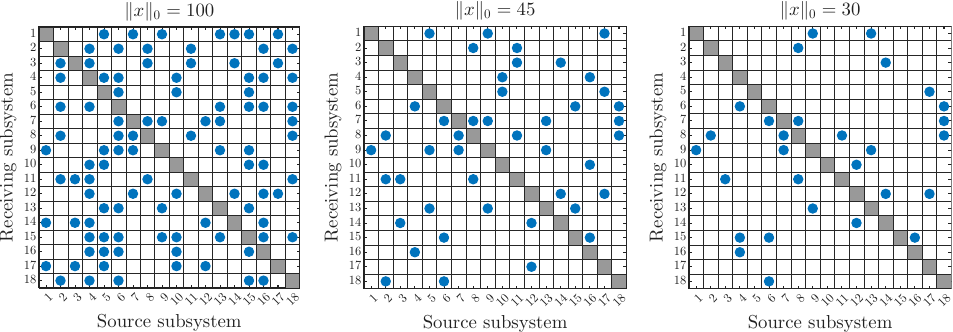} \vspace{-0.5em}
\caption{Sparsity pattern of the optimal coupling matrix \(\hat X\) for \(n_X \in \{100, 45, 30\}\), where each active interconnection \tikzdot{MatlabBlue} corresponds to a nonzero entry \(\hat X_{ij}\) coupling subsystem \(j\) to receiving subsystem~\(i\).}
    \label{fig: adjacency}
\end{figure*}

\begin{figure}[t]
\vspace{-0.75em}
    \centering
    \includegraphics[scale=0.95]{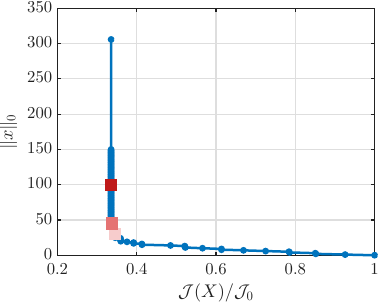} \vspace{-0.5em}
\caption{Pareto trade-off between the normalized cost \(\mathcal{J}(X)/\mathcal{J}_0\) and the number of active interconnections \(n_X \) \tikzdot{MatlabBlue}, showing that most interconnections contribute only marginally to performance. The points \(n_X \in \{100, 45, 30\}\) \tikzdot{MatlabRed} correspond to the solutions in Fig.~\ref{fig: adjacency} and the frequency responses in Fig.~\ref{fig: frf}.}
    \label{fig: pareto}
\end{figure}

\begin{figure}[t]
\vspace{-0.25em}
    \centering
    \includegraphics[scale=0.95]{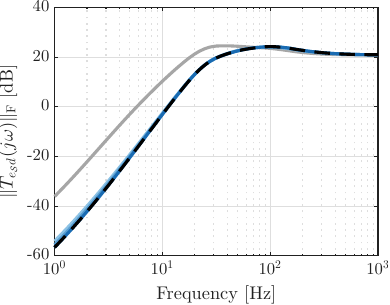} \vspace{-0.5em}
    \caption{Bode magnitude of the Frobenius norm of \(T_{e_{\mathcal S} d}(j\omega)\) for the baseline system without coupling \tikzline{MatlabGray}, sparse system-level controllers with \(n_X \in \{30, 45, 100\}\) \tikzline{MatlabBlue}, and the full interconnection \tikzdashedline{MatlabBlack}, where the sparse and dense interconnections achieve similar performance.}
    \label{fig: frf}
\end{figure}

\section{Conclusion}
This paper introduced a Youla-based add-on control framework that preserves the existing local controllers while enabling system-level performance improvements. The affine dependence of the closed-loop transfer functions on the interconnection matrix \(X\) leads to a convex \(\mathcal{H}_2\) synthesis problem for sparse coupling controllers. The synthesis procedure enables an optimal trade-off between system-level performance and interconnection complexity. Simulation results show that near-optimal performance can be achieved using only a limited number of active interconnections.

\end{document}